\newcommand{\aop}{Ann. Phys.~}
\newcommand{\jmp}{J. Math. Phys.~}
\newcommand{\jpa}{J. Phys. A~}
\newcommand{\njp}{New. J. Phys.~}
\newcommand{\pla}{Phys. Lett. A~}
\newcommand{\tinyspace}{\mspace{1mu}}
\newcommand{\abs}[1]{\left\lvert\tinyspace #1 \tinyspace\right\rvert}
\def\dif{\mathrm{d}}
\def\vol{\mathrm{vol}}
\def\complex{\mathbb{C}}
\def\real{\mathbb{R}}
\def\dif{\mathrm{d}}
\def\1{\mathbf{1}}
\def\ot{\otimes}
\newcommand{\inner}[2]{\langle #1 , #2\rangle}
\newcommand{\defeq}{\stackrel{\smash{\textnormal{\tiny def}}}{=}}
\newcommand{\Pa}[1]{\left(#1\right)}
\newcommand{\Br}[1]{\left[#1\right]}
\newcommand{\set}[1]{\{#1\}}
\newcommand{\Set}[1]{\left\{#1\right\}}
\def\cH{\mathcal{H}}\def\cJ{\mathcal{J}}
\def\bsa{\boldsymbol{a}}\def\bsb{\boldsymbol{b}}\def\bse{\boldsymbol{e}}
\def\bsp{\boldsymbol{p}}\def\bsq{\boldsymbol{q}}
\def\bsu{\boldsymbol{u}}\def\bsv{\boldsymbol{v}}\def\bsx{\boldsymbol{x}}\def\bsy{\boldsymbol{y}}
\def\rJ{\mathrm{J}}
\def\rM{\mathrm{M}}\def\rN{\mathrm{N}}
\def\A{\textsf{A}}\def\B{\textsf{B}}
\def\G{\textsf{G}}                  
\def\M{\textsf{M}}
\def\T{\textsf{T}}
\newtheorem{thrm}{Theorem}[section]
\newtheorem{prop}[thrm]{Proposition}
\theoremstyle{definition}
\begin{document}

%====================================================================================%

\title{Incompatibility probability of random quantum measurements}

%====================================================================================%

\author{Lin Zhang}
\email{godyalin@163.com} \affiliation{Institute of Mathematics,
Hangzhou Dianzi University, Hangzhou 310018, PR~China}
\affiliation{Max-Planck-Institute for Mathematics in the Sciences,
Leipzig 04103, Germany}
\author{Hua Xiang}
\email{hxiang@whu.edu.cn} \affiliation{School of Mathematics and
Statistics, Wuhan University, Wuhan 430072, PR China}
\author{Xianqing Li-Jost}
\email{xli-jost@mis.mpg.de} \affiliation{Max-Planck-Institute for
Mathematics in the Sciences, Leipzig 04103, Germany}
\author{Shao-Ming Fei}
\email{feishm@cnu.edu.cn} \affiliation{Max-Planck-Institute for
Mathematics in the Sciences, Leipzig 04103, Germany}
\affiliation{School of Mathematical Sciences, Capital Normal
University, Beijing 100048, China}

%\date{\today}

\begin{abstract}
Incompatibility of quantum measurements is of fundamental importance
in quantum mechanics. It is closely related to many nonclassical
phenomena such as Bell nonlocality, quantum uncertainty relations,
and quantum steering. We study the necessary and sufficient
conditions of quantum compatibility for a given collection of $n$
measurements in $d$-dimensional space. From the compatibility
criterion for two-qubit measurements, we compute the incompatibility
probability of a pair of independent random measurements. For a pair
of unbiased random qubit measurements, we derive that the
incompatibility probability is exactly $\frac35$. Detailed results
are also presented in figures for pairs of general qubit
measurements.
\end{abstract}

\maketitle

%=====================================================%
\section{Introduction}
%=====================================================%

Quantum theory has become the pillar of modern physics. Features
such as non-locality \cite{Bell1964}, steering \cite{Wiseman2007},
entanglement \cite{EPR1935,Werner1989}, contextuality
\cite{Bell1966}, uncertainty \cite{Heisenberg1927} and coherence
\cite{Baumgratz2014} distinguish quantum physics from classical
physics. Among these features of quantum physics, the quantum
incompatibility of quantum measurements forbids one from measuring
two observables simultaneously exactly when they are incompatible.
Quantum incompatibility can lead to many novel phenomena including
measurement uncertainty relations \cite{Busch2014}, steerability
\cite{Uola2015} and nonlocality \cite{Fine1982}. In the case of a
pair of two-outcome measurements, the incompatibility is equivalent
to Bell non-locality \cite{Wolf2009,Bene2018}, though measurement
incompatibility does not imply Bell non-locality in general
\cite{Quintino2016,Hirsch2018}.

Concerning quantum incompatibility, an important problem is the
development of an effective method to judge whether a set of
measurements is compatible (i.e., jointly measurable), which has
received much attention \cite{Bluhm2018,Bluhm2019,Jae2019}. The
authors of \cite{Bluhm2018} and \cite{Bluhm2019} used the notion of
free spectrahedra in the optimization theory to characterize the
measurement compatibility (also known as the joint measurability).
Due to the abstract construction of free spectahedra,
characterization of incompatibility along this approach is not very
operational. The authors of a recent study \cite{Jae2019} presented
a more operational way toward the characterization of quantum
incompatibility for the case where both measurements have the same
number of measurement outcomes.

The relations between quantum measurement incompatibility and
quantum information processing have also been extensively
investigated. In fact, it is shown that quantum incompatibility can
be detected by a state discrimination task with partial intermediate
information \cite{Carmelli2019,Carmelli2018}. Looking at it from
another perspective, every set of incompatible measurements provides
an advantage over compatible ones in a suitably chosen quantum state
discrimination task \cite{Uola2019,Skrzypczyk2019}.

In \cite{Zyczkowski1998} the separability probability problem has
been addressed: What is the probability that a randomly given
quantum state is entangled (or separable)? In order to answer this
question, the authors proposed to calculate the volume of all
separable bipartite states in a portion of the whole set of
bipartite states \cite{Zyczkowski2003} and \cite{Andai2006}. The
issue of probing entanglement and constructing a separable form in
the context of separable states has also been addressed in
\cite{Avron2009,Samuel2018}. Nevertheless, even for the simplest
case (i.e. two-qubit quantum states), computing the separability
probability according to the Hilbert-Schmidt measure is still a
challenging problem. Numerical simulations lead to intriguing
formulas for separability probability \cite{Slater2013}. It turned
out that the geometric separability probability of two-qubit quantum
systems is conjectured to be $\frac8{33}$, without proof up to now
\cite{Andai2017}.

Motivated by the problem of separability probability, we ask what
the probability is for a randomly given pair of measurements
[positive operator valued measurements (POVMs)] to be incompatible.
However, calculation of the incompatibility probability depends
heavily on the criteria of incompatibility. For a pair of unbiased
random qubit measurements with two measurement outcomes, we derive
that the incompatibility probability is exactly $\frac35$. The
incompatibility probability for a pair of general qubit measurements
is conjectured to be $\frac14$ by numerical simulation. As for the
case of a pair of two-outcome measurements, the incompatibility is
equivalent to Bell non-locality \cite{Bene2018}, this fact suggests
that 25\% of pairs of qubit measurements can lead to Bell
non-locality. If we are restricted to the use of pairs of unbiased
qubit measurements, the fraction increases to 60\%.

In this paper, first we deal with the necessary and sufficient
conditions of (in-)compatibility for a finite number of measurements
with arbitrary finite outcomes. Then we investigate the geometry of
the set of incompatible pairs of measurements. We compute the
incompatibility probability: the ratio of the set of incompatible
pairs of measurements versus the set of all pairs of measurements.

%===========================================================================%
\section{Characterization of quantum measurement incompatibility}\label{sect:quntincom}
%===========================================================================%

For a positive integer $\ell$, denote $[\ell]:=\set{1,\ldots,\ell}$.
We say that $\M=(M_{i_1\ldots i_n})$ is an $n$-th Hermitian tensor
if each $M_{i_1\ldots i_n}$ is a Hermitian operator acting on
$d$-dimensional Hilbert space $\cH_d$,
where $i_1\in[k_1],\ldots,i_n\in[k_n]$. A POVM is
represented by an $n$-th Hermitian tensor $\M=(M_{i_1\ldots i_n})$,
where each $M_{i_1\ldots i_n}$ is positive semi-definite and
$\sum M_{i_1\ldots i_n}=\1_d$, with $\1_d$ the identity operator on $\cH_d$.
The following $n$ POVMs $\A^{(1)}=(A^{(1)}_{i_1})$, $\ldots$, $\A^{(n)}=(A^{(n)}_{i_n})$, with $i_1\in[k_1]$, $\ldots$, $i_n\in[k_n]$, defined by
\begin{eqnarray*}
A^{(1)}_{i_1} \defeq \sum_{i_2\ldots i_n}M_{i_1\ldots
i_n},~\ldots,~A^{(n)}_{i_n} \defeq \sum_{i_1\ldots
i_{n-1}}M_{i_1\ldots i_n},
\end{eqnarray*}
are called the \emph{marginals} of $\M$.

For $n$ given POVMs $\A^{(l)}$ $(l\in[n])$ on $\cH_d$, if there exists a POVM,
$\M=(M_{i_1\ldots i_n})$, where $i_l\in[k_l]$  ($l \in [n]$), such that $\A^{(l)}$
$(l\in[n])$ are the marginals of $\M$, we say that the $n$ POVMs
$\A^{(l)}$ $(l\in[n])$ are \emph{compatible} or \emph{jointly
measurable}, and $\M$ is called the \emph{joint measurement}.
Otherwise, they are called \emph{incompatible} \cite{Bluhm2019}.

Generally, for given $n$ jointly measurable POVMs $\A^{(l)}$
$(l\in[n])$, their joint measurements are not unique. We denote
$\cJ\Pa{\A^{(1)},\ldots,\A^{(n)}}$ all the joint measurements for
$n$ arbitrary given POVMs $\A^{(l)}$ $(l\in[n])$. Then
$\cJ\Pa{\A^{(1)},\ldots,\A^{(n)}}=\emptyset$ if the $n$ POVMs
$\A^{(l)}$ $(l\in[n])$ are not jointly measurable. Thus $\A^{(l)}$
$(l\in[n])$ are jointly measurable if and only if
$\cJ\Pa{\A^{(1)},\ldots,\A^{(n)}}\neq\emptyset$.

Consider $n$ arbitrary probability vectors $\bsp^{(1)}
=\Pa{p^{(1)}_{i_1}},\ldots,\bsp^{(n)}=\Pa{p^{(n)}_{i_n}}$, where
$i_l\in[k_l]$  ($l \in [n]$), such that
%$i_1\in[k_1],\ldots,i_n\in[k_n]$.
all the components of $p^{(l)}$ are positive for all $l\in[n]$.
Denote $\bsp^{(l)}\ot\1_d$ a POVM with measurement operators given
by $\{p^{(l)}_{i_l}\1_d\}$, where $i_l\in[k_l]$, $l \in [n]$.
Clearly,
$\cJ\Pa{\bsp^{(1)}\ot\1_d,\ldots,\bsp^{(n)}\ot\1_d}\neq\emptyset$.
Let $\T=(T_{i_1\ldots i_n})$ be the $n$-th Hermitian tensor such
that their $n$ marginals are given by
$\bsp^{(1)}\ot\1_d,\ldots,\bsp^{(n)}\ot\1_d$,
\begin{eqnarray*}\label{eq:T-tensor}
\sum_{i_2,\ldots,i_k}T_{i_1\ldots i_n} =
p^{(1)}_{i_1}\1_d,~\ldots,~\sum_{i_1,\ldots,i_{n-1}}T_{i_1\ldots i_n}
= p^{(n)}_{i_n}\1_d.
\end{eqnarray*}
Namely, $\T\in\cJ\Pa{\bsp^{(1)}\ot\1_d,\ldots,\bsp^{(n)}\ot\1_d}$. We have

{\bf Theorem 1.} For $n$ POVMs $\A^{(l)}$ and $n$ probability
vectors $\bsp^{(l)}$, set
\begin{eqnarray}\label{thm1}
M_{i_1\ldots i_n} =
\Pa{\prod^n_{\ell=1}p^{(\ell)}_{i_\ell}}\sum^n_{l=1}\frac1{p^{(l)}_{i_l}}A^{(l)}_{i_l}
- (n-1)T_{i_1\ldots i_n}.
\end{eqnarray}
Then $\A^{(l)}$ $(l\in[n])$ are $n$ marginals of $\M=(M_{i_1\ldots
i_n})$, and $\A^{(l)}$ $(l\in[n])$ are compatible if and only if for
any collection of $n$ probability vectors $\bsp^{(l)}(l\in[n])$,
there exists some $n$-th Hermitian tensor $\T\in
\cJ\Pa{\bsp^{(1)}\ot\1_d,\ldots,\bsp^{(n)}\ot\1_d}$ such that $\M$
is a POVM.

\begin{proof}[{\bf Proof.}]
Define $\G=(G_{i_1\ldots i_n})$ as follows:
\begin{eqnarray*}
G_{i_1\ldots i_n}\defeq
\frac{\prod^n_{\ell=1}p^{(\ell)}_{i_\ell}}n\sum^n_{l=1}\frac1{p^{(l)}_{i_l}}A^{(l)}_{i_l}.
\end{eqnarray*}
It is directly verified that $\G$ is a POVM. Its $n$ marginals are given by
\begin{eqnarray*}
\frac1n\A^{(l)}+\Pa{1-\frac1n}\bsp^{(l)}\ot\1_d\quad (l\in[n]).
\end{eqnarray*}
Here we view each POVM as a column-block matrix and $\ot$ stands for
the Kronecker tensor product. That is,
$$
\A^{(l)}=\Pa{\begin{array}{c}
                          A^{(l)}_1 \\
                          \vdots \\
                          A^{(l)}_{k_l}
                        \end{array}
}\text{ and }~\bsp^{(l)}\ot\1_d = \Pa{\begin{array}{c}
                          p^{(l)}_1\1_d \\
                          \vdots \\
                          p^{(l)}_{k_l}\1_d
                        \end{array}
}.
$$
%----------------------------------------------------------------%
Apparently $G_{i_1\ldots i_n}$ is non-negative for all $i_l\in[k_l]$
($l \in [n]$) by definition. Moreover, one has
\begin{eqnarray*}
\sum_{i_2,\ldots,i_n}G_{i_1\ldots
i_n}=\frac1nA^{(1)}_{i_1}+\frac1np^{(1)}_{i_1}\1_d+\cdots+\frac1np^{(1)}_{i_1}\1_d,
\end{eqnarray*}
and hence $\sum_{i_1,\ldots,i_n}G_{i_1\ldots i_n} =\1_d$. This shows that
$\frac1n\A^{(1)}+\Pa{1-\frac1n}\bsp^{(1)}\ot\1_d$ is one of the
marginals of $\G$. Other marginals can be obtained similarly.
Furthermore, the $n$-th Hermitian tensor $n\G -
(n-1)\T=(nG_{i_1\ldots i_n} - (n-1)T_{i_1\ldots i_n})$ has $n$
marginals $\A^{(1)},\ldots,\A^{(n)}$. Indeed, for instance,
$\sum_{i_2,\ldots,i_n}(nG_{i_1\ldots i_n} - (n-1)T_{i_1\ldots i_n})
=A^{(1)}_{i_1}$. This completes the proof.
\end{proof}

Theorem 1 also indicates that adding noise to the POVMs, i.e.,
taking convex combinations of the original measurement operators of these POVMs and the trivial
measurement (the identity operator), can make the resulting new POVMs more compatible (jointly measurable).

In the following we consider the case of $d=n=2$.
Let $\A=(A_1,A_2)$ and $\B=(B_1,B_2)$ be two POVMs on $\complex^2$.
By using the Bloch representation, we can generally write
\begin{eqnarray}\label{eq:APOVM}
A_i=\frac12\Br{(1+(-1)^ia_0)\1_2+(-1)^{i}\bsa\cdot\boldsymbol{\sigma}},~~i=1,2,
\end{eqnarray}
where $\boldsymbol{\sigma}=(\sigma_1,\sigma_2,\sigma_3)$ with
$\sigma_i$, i=1,2,3, the Pauli matrices, and $\bsa$ is a three
dimensional real vector satisfying $\abs{\bsa}\leqslant 1-\abs{a_0}$
with $a_0\in[-1,1]$. Here $\abs{\bsa}$ is referred to as the
\emph{sharpness} while $\abs{a_0}$ the \emph{biasedness}. Similarly,
\begin{eqnarray}\label{eq:BPOVM}
B_j =
\frac12\Br{(1+(-1)^jb_0)\1_2+(-1)^j\bsb\cdot\boldsymbol{\sigma}},~~
j=1,2,
\end{eqnarray}
where $\abs{\bsb}\leqslant 1-\abs{b_0}$ with $b_0\in[-1,1]$.
Choosing arbitrarily two probability vectors $\bsp=(p_1,p_2)$ and
$\bsq=(q_1,q_2)$, we have from Eq.~\eqref{thm1},
\begin{eqnarray*}
\M(\A,\B;\bsp,\bsq;\T) = (M_{ij}),
\end{eqnarray*}
where $M_{ij}=q_jA_i+p_iB_j-T_{ij}$, and $\T=(T_{ij})$ satisfies
that $T_{i1}+T_{i2}=p_i\1_2$ and $T_{1j}+T_{2j}=q_j\1_2$ for
$i,j\in[2]$.

We write $\T$ in the block-matrix form:
\begin{eqnarray*}
\T=\Pa{\begin{array}{cc}
                                          X & p_1\1_2-X \\
                                          q_1\1_2-X &
                                          X+(q_2-p_1)\1_2
                                        \end{array}
},
\end{eqnarray*}
where $p_2-q_1=q_2-p_1$, and $X$ is some $2\times 2$ Hermitian
matrix. Assume that $p_1=p$ and $q_1=q$, where $p,q\in[0,1]$. By
Bloch representation, $X$ can be written as
\begin{eqnarray*}
X = \frac12\Br{(1-x_0)\1_2-\bsx\cdot\boldsymbol{\sigma}},\quad
(x_0,\bsx)\in\real^4.
\end{eqnarray*}
Note that $\M$ is a legal POVM if and only if $M_{ij}\geqslant0$ for
all $i,j$. In other words, $\M$ is a POVM if and only if
\begin{eqnarray*}
\begin{cases}
&\abs{q\bsa+p\bsb-\bsx}\\
& \quad \leqslant (q+p-1)-(qa_0+pb_0-x_0),\\
&\abs{\bsx+(1-q)\bsa-p\bsb}\\
&\quad \leqslant (2-q-p)-(x_0+(1-q)a_0-pb_0),\\
&\abs{\bsx-q\bsa+(1-p)\bsb}\\
& \quad \leqslant (2-q-p)-(x_0-qa_0+(1-p)b_0),\\
& \abs{(1-q)\bsa+(1-p)\bsb+\bsx}\notag\\
& \quad \leqslant (q+p-1)+(x_0+(1-q)a_0+(1-p)b_0).
\end{cases}
\end{eqnarray*}

A question naturally arises is: What kind of relationship should be
satisfied by the 8-tuple $(a_0,\bsa,b_0,\bsb)$ such that $\A$ and
$\B$ are jointly measurable for any prescribed $\bsp$ and $\bsq$.
Since $\T$ is $\bsp$ and $\bsq$ dependent, without loss of
generality, we may take probability vectors
$\bsp=\bsq=(\frac12,\frac12)$. Denote $\bsu=\bsa+\bsb$,
$\bsv=\bsa-\bsb$; $\alpha=a_0+b_0$, $\beta=a_0-b_0$, and
$\bsy=2\bsx$, $y_0=2x_0$. Now for a pair of unbiased observables
$\A$ and $\B$, i.e., $a_0=b_0=0$, one has $\alpha=\beta=0$ and the
solution set of the above four inequalities is not empty if and only
if $y_0\geqslant \abs{\bsu}\text{ and }2-y_0\geqslant \abs{\bsv}$,
i.e., $y_0$ in the closed interval $\Br{\abs{\bsu},2-\abs{\bsv}}$.
This amounts to saying that $\abs{\bsu}\leqslant2-\abs{\bsv}$.
Therefore $\A$ and $\B$ are jointly measurable if and only if
$\abs{\bsa+\bsb}+\abs{\bsa-\bsb}\leqslant2$, which is just the
result obtained in \cite{Busch1986}. For a general pair of qubit
observables, the following result \cite{Yu2010} answers this
question. For a pair of qubit observables $\A$ and $\B$ in
Eq.~\eqref{eq:APOVM} and Eq.~\eqref{eq:BPOVM}, respectively, $\A$
and $\B$ are compatible if and only if the following inequality
holds:
\begin{eqnarray}\label{eq:yu}
&&\Pa{1-h(a_0,\bsa)^2-h(b_0,\bsb)^2}\Pa{1-\frac{a^2_0}{h(a_0,\bsa)^2}-\frac{b^2_0}{h(b_0,\bsb)^2}}\notag\\
&&\leqslant (\inner{\bsa}{\bsb}-a_0b_0)^2,
\end{eqnarray}
where
$h(x_0,\bsx)=\frac{\sqrt{(1+x_0)-\abs{\bsx}^2}+\sqrt{(1-x_0)-\abs{\bsx}^2}}2$
for $x_0\in[-1,1]$ and $\abs{\bsx}\leqslant1-\abs{x_0}$.

Based on this result, in what follows, we analyze the
incompatibility probability of random qubit measurements.

%=================================================================================================%
\section{Incompatibility probability of random qubit measurements}\label{sect:incomprob}
%=================================================================================================%

We now consider the following question. Let $\Theta_{d,n}$ be the
set of all pairs $(\A,\B)$ of POVMs with $n$ measurement operators
each, with $\A=(A_i)^n_{i=1}$ and $\B=(B_j)^n_{j=1}$ acting on
$\complex^d$. Denote $\Theta^{\rN\rJ\rM}_{d,n}$ and
$\Theta^{\rJ\rM}_{d,n}$ the set of all incompatible and compatible
pairs of POVMs from $\Theta_{d,n}$, respectively. Namely,
$\Theta^{\rJ\rM}_{d,n}=\Theta_{d,n}\backslash\Theta^{\rN\rJ\rM}_{d,n}$.
Let $\vol(\Theta^{\rN\rJ\rM}_{d,n})$ and $\vol(\Theta_{d,n})$ be the
volumes of $\Theta^{\rN\rJ\rM}_{d,n}$ and $\Theta_{d,n}$,
respectively. We would like to know the geometric probability of
incompatibility,
$\mathbf{Pr}[\Theta^{\rN\rJ\rM}_{d,n}]=\vol(\Theta^{\rN\rJ\rM}_{d,n})/\vol(\Theta_{d,n})$.
This question heavily depends on the criteria of compatibility. We
treat this problem below for the case of qubit POVMs.

In the following we study the geometric probability of
incompatibility for fixed $(a_0,b_0)$. For fixed
$(a_0,b_0)\in[-1,1]\times[-1,1]$ in POVMs Eq.~\eqref{eq:APOVM} and
Eq.~\eqref{eq:BPOVM}, we denote $\Theta_{2,2}(a_0,b_0)$ the section
of $\Theta_{2,2}$ at $(a_0,b_0)$, and similarly for
$\Theta^{\rN\rJ\rM}_{2,2}(a_0,b_0)$ and
$\Theta^{\rJ\rM}_{2,2}(a_0,b_0)$. We consider the parameterized
probabilities:
$\mathbf{Pr}[\Theta^{\rN\rJ\rM}_{2,2}(a_0,b_0)]=\vol(\Theta^{\rN\rJ\rM}_{2,2}(a_0,b_0))/\vol(\Theta_{2,2}(a_0,b_0))$
and
$\mathbf{Pr}[\Theta^{\rJ\rM}_{2,2}(a_0,b_0)]=1-\mathbf{Pr}[\Theta^{\rN\rJ\rM}_{2,2}(a_0,b_0)]$.
Note that the parameters $\bsa$ and $\bsb$ in
$\Theta_{2,2}(a_0,b_0)$ satisfy the constraints $\abs{\bsa}\leqslant
1-\abs{a_0}$ and $\abs{\bsb}\leqslant 1-\abs{b_0}$. We have
\begin{eqnarray*}
\vol(\Theta_{2,2}(a_0,b_0))=\Pa{\frac{4\pi}3}^2(1-\abs{a_0})^3(1-\abs{b_0})^3.
\end{eqnarray*}
It suffices to calculate the volume $\vol(\Theta^{\rN\rJ\rM}_{2,2}(a_0,b_0))$.

\subsection{The case for unbiased measurements: $(a_0,b_0)=(0,0)$}

We first consider the unbiased POVMs $\A$ and $\B$, i.e.,
$(a_0,b_0)=(0,0)$, determined by the vectors $\bsa$ and $\bsb$,
respectively. In this case Eq.~\eqref{eq:yu} gives rise to the
condition that $\A$ and $\B$ are incompatible:
$f(\bsa,\bsb):=\abs{\bsa}^2+\abs{\bsb}^2-s^2>1$, where
$s=\inner{\bsu}{\bsv}\in[-1,1]$. This condition $f(\bsa,\bsb)>1$ is
equivalent to $g(\bsa,\bsb):=\abs{\bsa+\bsb}+\abs{\bsa-\bsb}>2$ (see
Appendix~\ref{sect:app-A}).

In stead of calculating the volume
$\vol(\Theta^{\rN\rJ\rM}_{2,2}(0,0))$, here we can also consider
$\bsa$ and $\bsb$ as random vectors with probability distribution
$\dif \omega=p(\bsa)p(\bsb)[\dif \bsa][\dif \bsb]$ given by \cite{Zhang2018pla},
\begin{eqnarray}\label{eq:pdf}
\dif\omega=\Pa{\frac3{4\pi}}^2a^2b^2\delta(1-\abs{\bsu})\delta(1-\abs{\bsv})\dif
a\dif b [\dif \bsu] [\dif \bsv],
\end{eqnarray}
where $\bsa=a\bsu$ and $\bsb=b\bsv$ with $a=\abs{\bsa}\in[0,1]$, $b=\abs{\bsb}\in[0,1]$,
and $\abs{\bsu}=\abs{\bsv}=1$.

Denote the whole domain corresponding to $\Theta_{2,2}(0,0)$ by
$\widetilde\Omega=\Set{(\bsa,\bsb)\in\real^3\times
\real^3:\abs{\bsa}\leqslant1\text{ and }\abs{\bsb}\leqslant1}$ and
the domain corresponding to $\Theta^{\rN\rJ\rM}_{2,2}(0,0)$ by the
following
\begin{eqnarray*}
\widetilde\Omega_{\rN\rJ\rM}&=&\Set{(\bsa,\bsb)\in\real^6:f(\bsa,\bsb)>1\wedge
\abs{\bsa}\leqslant1\wedge
\abs{\bsb}\leqslant1}\\
&=&\Set{(\bsa,\bsb)\in\real^6:g(\bsa,\bsb)>2\wedge
\abs{\bsa}\leqslant1\wedge \abs{\bsb}\leqslant1}.
\end{eqnarray*}
It is easily verified that $\int_{\widetilde\Omega} \dif\omega=1$.
The problem is to calculate $\int_{\widetilde\Omega_{\rN\rJ\rM}}
\dif\omega$.

The condition $f(\bsa,\bsb)>1$ or $g(\bsa,\bsb)>2$ can be expressed
as $s^2<a^{-2}+b^{-2}-(ab)^{-2}$, which can be rewritten as
\begin{eqnarray}\label{pp}
\begin{cases}
a\in(0,1), b\in\Pa{\sqrt{1-a^2},1},\\
s\in\Pa{-\frac{\sqrt{a^2+b^2-1}}{ab},\frac{\sqrt{a^2+b^2-1}}{ab}}.
\end{cases}
\end{eqnarray}
The joint probability density function of such a 3-tuple
$(a,b,s)\in[0,1]^2\times[-1,1]$ is given by
$$
p(a,b,s)=3a^2\times 3b^2\times \frac12=\frac92 a^2b^2.
$$
Denote
$\Omega_{\rN\rJ\rM}=\Set{(a,b,s)\in\Omega:s^2<a^{-2}+b^{-2}-(ab)^{-2}}$,
i.e., all 3-tuples $(a,b,s)$ corresponding to $f(\bsa,\bsb)>1$ [or
$g(\bsa,\bsb)>2$]. Now the 3-tuple $(a,b,s)\in\Omega_{\rN\rJ\rM}$ if
and only if the conditions in Eq.~\eqref{pp} are satisfied. In fact,
$(a,b,s)\in\Omega_{\rN\rJ\rM}$ can also be rewritten as
\begin{eqnarray*}
s\in(-1,1), ~~ a\in(0,1), ~~ b\in\Pa{\sqrt{\frac{1-a^2}{1-
a^2s^2}},1}.
\end{eqnarray*}
Now the problem of calculating $\int_{\widetilde\Omega_{\rN\rJ\rM}}
\dif\omega$ is reduced to the calculation of
$\int_{\Omega_{\rN\rJ\rM}} p(a,b,s)\dif a\dif b\dif s$. We have
\begin{eqnarray*}
\int_{\Omega_{\rN\rJ\rM}} p(a,b,s)\dif a\dif b\dif s=\frac35.
\end{eqnarray*}
Therefore, we have

{\bf Theorem 2.} For a pair of random unbiased qubit POVMs $\A$ and
$\B$, generated by Eq.~\eqref{eq:pdf} via Bloch representation, the
incompatibility probability is given by
\begin{eqnarray}\label{eq:unbiasedpovm}
\mathbf{Pr}\Br{\Theta^{\rN\rJ\rM}_{2,2}(0,0)}=\frac35.
\end{eqnarray}
{\bf Remark.} Equation~\eqref{eq:unbiasedpovm} can also be derived
by calculating the volume $\vol(\Theta^{\rN\rJ\rM}_{2,2}(0,0))$.
Using the Lebesgue measure, we have $[\dif \bsa]=a^2\dif a\times
\delta(1-\abs{\bsu})[\dif\bsu]$. The Lebesgue volume of $\widetilde
\Omega$ is given by $\vol(\widetilde\Omega)=\frac{(4\pi)^2}9$. Thus
$\widetilde\Omega_{\rN\rJ\rM}$ can be expressed as
\begin{widetext}
$$
\Set{(a\bsu,b\bsv)\in\widetilde\Omega:a\in(0,1),~b\in
\Pa{\sqrt{1-a^2},1},~s\in\Pa{-\frac{\sqrt{a^2+b^2-1}}{ab},\frac{\sqrt{a^2+b^2-1}}{ab}}}.
$$
Then
\begin{eqnarray*}
\vol(\widetilde\Omega_{\rN\rJ\rM})&=&\int_{\widetilde\Omega_{\rN\rJ\rM}}[\dif\bsa][\dif\bsb]=N^2_3\int^1_0
\dif a\, a^2\int^1_{\sqrt{1-a^2}}\dif b\,
b^2\int^{\frac{\sqrt{a^2+b^2-1}}{ab}}_{-\frac{\sqrt{a^2+b^2-1}}{ab}}p_3(s)\dif
s \\[2mm]
&=&\frac{(4\pi)^2}2 \int^1_0 \dif a\, a^2\int^1_{\sqrt{1-a^2}}\dif b\,
b^2\int^{\frac{\sqrt{a^2+b^2-1}}{ab}}_{-\frac{\sqrt{a^2+b^2-1}}{ab}}\dif
s \\
&=& (4\pi)^2 \int^1_0 \dif a\, a^2\int^1_{\sqrt{1-a^2}}\dif b\,
b^2\frac{\sqrt{a^2+b^2-1}}{ab}\\
&=&(4\pi)^2 \int^1_0 \dif a\, a^2\frac{a^2}3\\
&=& \frac{(4\pi)^2}{15},
\end{eqnarray*}
where $N_3=4\pi$ and $p_3(s)=\frac12$. Note that $p_3(s)$ is just
the case of $p_m(s)$ for $m=3$, and $p_m(s)$ is given by (see
Appendix~\ref{sect:app-B})
$$
p_m(s)=\frac1{N^2_m}\int_{\real^m\times\real^m}\delta(s-\inner{\bsu}{\bsv})\delta(1-\abs{\bsu})
\delta(1-\abs{\bsv})[\dif\bsu][\dif\bsv].
$$
\end{widetext}
Therefore, the geometric probability of incompatibility is given by $\vol(\widetilde
\Omega_{\rN\rJ\rM})/\vol(\widetilde\Omega)=3/5$.

As $\A$ and $\B$ are incompatible if $f(\bsa,\bsb)>1$ [or
$g(\bsa,\bsb)>2$], it is also interesting to calculate analytically
the expectations $\mathbb{E}[f(\bsa,\bsb)]$ and
$\mathbb{E}[g(\bsa,\bsb)]$ of $f(\bsa,\bsb)$ and $g(\bsa,\bsb)$,
respectively. By direct computation we have
\begin{eqnarray*}
\mathbb{E}[f(\bsa,\bsb)]&=&\int_{\Omega_{\rN\rJ\rM}}
f(a,b,s)p(a,b,s)\dif a\dif b\dif s \\
&=& \frac{27}{25}>1
\end{eqnarray*}
and, similarly, $\mathbb{E}[g(\bsa,\bsb)]=\frac{72}{35}>2$. These results are
consistent with Eq.~\eqref{eq:unbiasedpovm}: two randomly selected
measurements $\A$ and $\B$ are most probably incompatible.

\subsection{The case $(a_0,b_0)=(\lambda,0)$ for $\lambda\in(-1,1)$}

The case $(a_0,b_0)=(\lambda,0)$, where $\lambda\in(-1,1)$,
corresponds to the case where $\A$ is a biased measurement and $\B$
is an unbiased one. In this case
$\Theta^{\rN\rJ\rM}_{2,2}(\lambda,0)$ can be parameterized as the
set $\widetilde\Theta^{\rN\rJ\rM}_{2,2}(\lambda,0)$ such that
\begin{widetext}
\begin{eqnarray*}
&&b\in(\sqrt{\abs{\lambda}},1), \quad
s\in\Pa{-\sqrt{\frac{b^2-\abs{\lambda}}{b^2(1-\abs{\lambda})}},
\sqrt{\frac{b^2-\abs{\lambda}}{b^2(1-\abs{\lambda})}}},\\
&&a\in\Pa{\sqrt{\frac{b^2 - b^4 - b^2 s^2 + b^4 s^2 - \abs{\lambda}^2
+ b^2 \abs{\lambda}^2 + b^2 s^2\abs{\lambda}^2 -
 b^4 s^2 \abs{\lambda}^2}{b^2(1-s^2)(1-b^2s^2)}},1-\abs{\lambda}}.
\end{eqnarray*}
\end{widetext}
Thus,
\begin{eqnarray*}
\mathbf{Pr}[\Theta^{\rN\rJ\rM}_{2,2}(\lambda,0)] =
\frac92(1-\abs{\lambda})^{-3}\int_{\widetilde
\Theta^{\rN\rJ\rM}_{2,2}(\lambda,0)} a^2b^2\dif b\dif s\dif a.
\end{eqnarray*}

By numerical computation it can be shown that
$\mathbf{Pr}[\Theta^{\rN\rJ\rM}_{2,2}(\lambda,0)]$ decreases when
$\abs{\lambda}\in[0,1)$ increases. Namely for larger
$\abs{\lambda}$, randomly selected $\A$ and $\B$ are most probably
compatible.

\subsection{The case of general $(a_0,b_0)$}

Generally, $\Theta_{2,2}$ can be identified as
$(a_0,a\bsu,b_0,b\bsv)$ such that $\abs{a_0}+a\leqslant1$ with
$a_0\in[-1,1],\,a\in[0,1]$ and $\abs{b_0}+b\leqslant1$ with
$b_0\in[-1,1],\,b\in[0,1]$. $\Theta^{\rN\rJ\rM}_{2,2}$ is a subset
of $\Theta_{2,2}$ and can be identified as the set that
Eq.~\eqref{eq:yu} is violated.

Denote $V=\Set{(x_0,\bsx)\in\real^4: \abs{x_0}+\abs{\bsx}\leqslant1\text{
for }x_0\in[-1,1]}$. We have
\begin{eqnarray*}
\vol(V) &=& \int^1_{-1}\dif
x_0\int_{\Set{\bsx\in\real^3:\abs{\bsx}\leqslant 1-\abs{x_0}}}[\dif
\bsx]\\
&=&\int^1_{-1}\frac{4\pi (1-\abs{x_0})^3}{3}\dif x_0=\frac{2\pi}3.
\end{eqnarray*}
It is easily seen that
$\vol(\Theta_{2,2})=\vol(V)^2=\Pa{\frac{2\pi}{3}}^2$. Thus the
incompatibility probability is given by
$$
\mathbf{Pr}[\Theta^{\rN\rJ\rM}_{2,2}]=\frac{\vol(\Theta^{\rN\rJ\rM}_{2,2})}{\vol(\Theta_{2,2})}.
$$
The volume of $\Theta^{\rN\rJ\rM}_{2,2}$ can be obtained as follows
by numerical calculation (see Appendix~\ref{sect:app-C}):
$\vol(\Theta^{\rN\rJ\rM}_{2,2}) \doteq 1.09662$, which is
approximately $\frac{\pi^2}9$. We \emph{conjecture} that the
incompatibility probability of a pair of random qubit measurements
$\A$ and $\B$, generated by $(a_0,\bsa)$ and $(b_0,\bsb)$ in
Eq.~\eqref{eq:yu}, is given by
\begin{eqnarray}
\mathbf{Pr}[\Theta^{\rN\rJ\rM}_{2,2}]=\frac14.
\end{eqnarray}

Moreover, the following result can be found:
if $\abs{a_0}=\abs{b_0}\geqslant\frac12$, then
$\Theta^{\rN\rJ\rM}_{2,2}(a_0,b_0)=\emptyset$, i.e., $\A$ and $\B$
are compatible, $\mathbf{Pr}[\Theta^{\rN\rJ\rM}_{2,2}(a_0,b_0)]=0$.

More detailed computational results on the incompatibility
probability $\mathbf{Pr}[\Theta^{\rN\rJ\rM}_{2,2}(a_0,b_0)]$ are
displayed in Figs.~\ref{Fig_njmprob}--\ref{Fig_onefourth}. From the
Fig.~\ref{Fig_njmprob} we observe that outside the curve
$\abs{a_0}+\abs{b_0} = 1$, the incompatibility probability is $0$.
It increases smoothly towards the origin where it attains the peak
value $\tfrac 35$. Figure~\ref{Fig_njmprobb} shows the contours of
Fig.~\ref{Fig_njmprob}, displaying isolines of the incompatibility
probability. Figure~\ref{Fig_onefourth} is one quarter of
Fig.~\ref{Fig_njmprob}, corresponding to the parameter regions
$a_0\geqslant0$ and $ b_0\geqslant0$.

\begin{figure}[htbp] \centering
\includegraphics[width=0.5\textwidth]{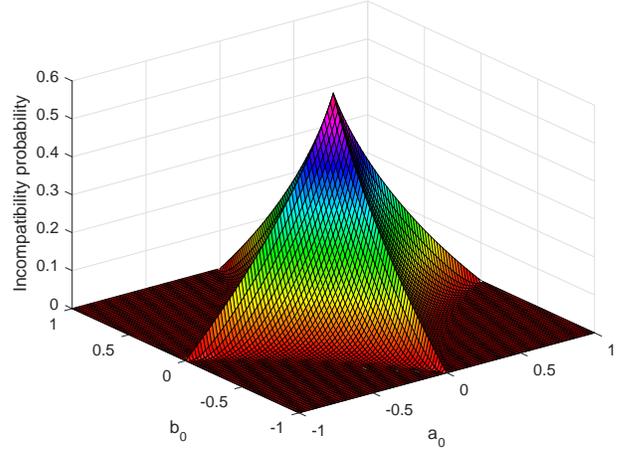}
\caption{The incompatibility probability
$\mathbf{Pr}[\Theta^{\rN\rJ\rM}_{2,2}(a_0,b_0)]$ of a pair of qubit
measurements $\A$ and $\B$, $(a_0,b_0)\in[-1,1]^2$.}
\label{Fig_njmprob}
\end{figure}

\begin{figure}[htbp] \centering
\includegraphics[width=0.5\textwidth]{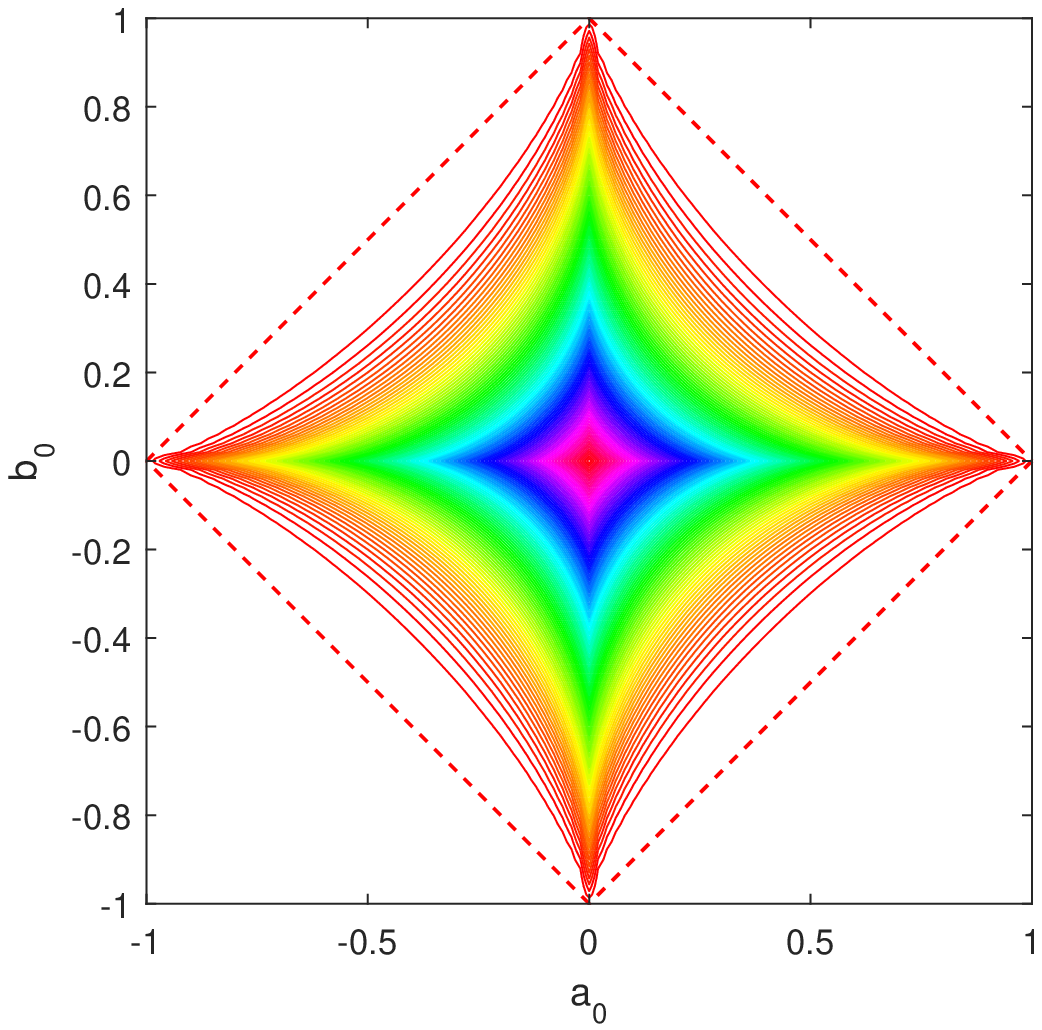}
\caption{The contours of the incompatibility probability
$\mathbf{Pr}[\Theta^{\rN\rJ\rM}_{2,2}(a_0,b_0)]$ corresponding to
Fig.~\ref{Fig_njmprob}.} \label{Fig_njmprobb}
\end{figure}

\begin{figure}[htbp] \centering
\includegraphics[width=0.5\textwidth]{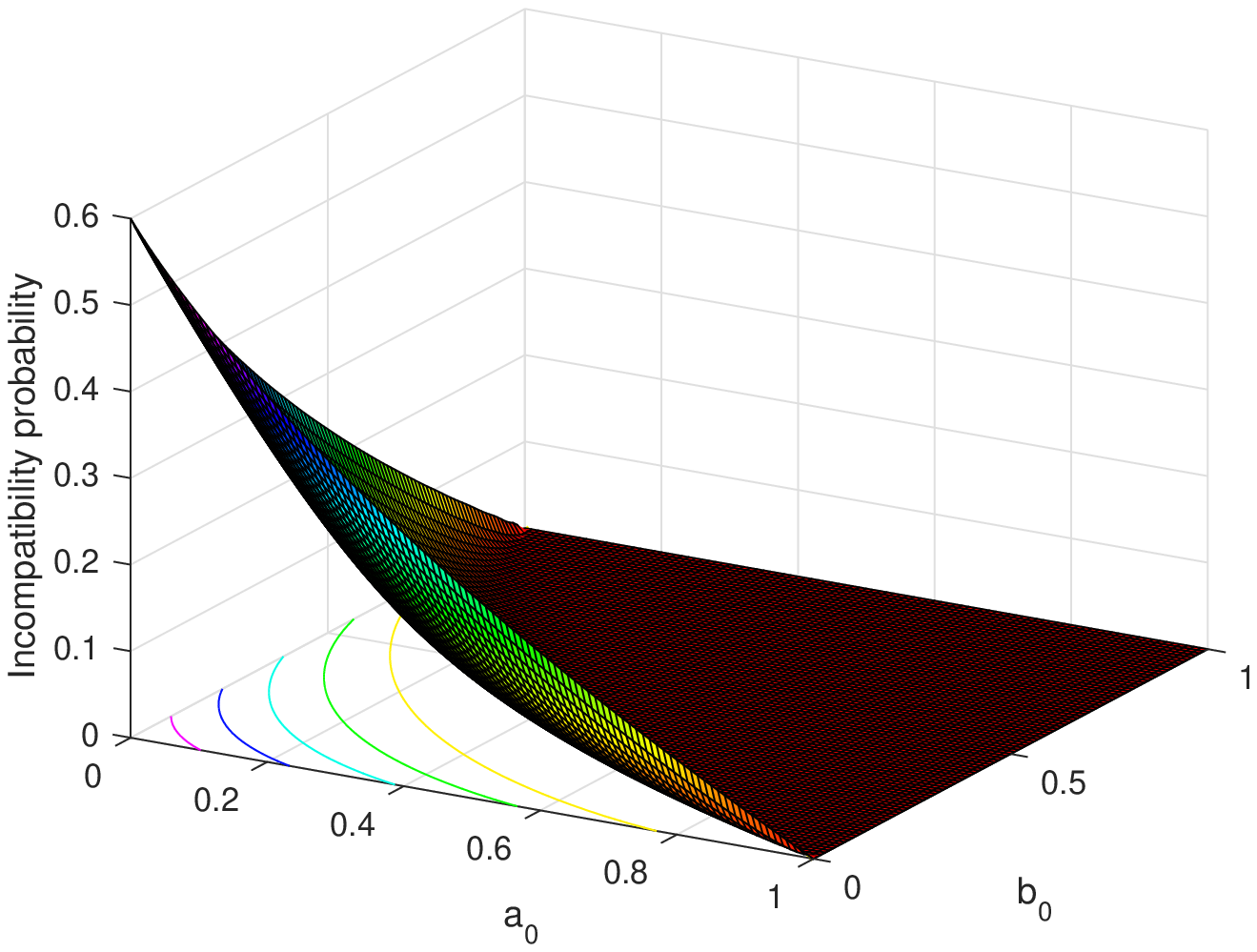}
\caption{The incompatibility probability
$\mathbf{Pr}[\Theta^{\rN\rJ\rM}_{2,2}(a_0,b_0)]$ of a pair of qubit
measurements $\A$ and $\B$ for $a_0\geqslant0$ and $
b_0\geqslant0$.} \label{Fig_onefourth}
\end{figure}

%======================================================================%
\section{Conclusions}\label{sect:conclusion}
%======================================================================%

We have dealt with the necessary and sufficient conditions of
incompatibility for a finite number of measurements with arbitrary
finite outcomes. Our approach toward quantum incompatibility covers
essentially the theoretic framework of \cite{Jae2019} and connects
with the results in \cite{Busch1986} and \cite{Yu2010} in the case
of qubit measurements with two-outcomes. Based on the necessary and
sufficient conditions of compatibility for qubit measurements, we
have analytically worked out the incompatibility probability of a
pair of unbiased qubit measurements. The incompatibility probability
of one unbiased and one biased qubit measurements, together with a
pair of general qubit measurements, has also been investigated by
analytical derivations and numerical calculations. These results may
highlight studies on topics such as Bell nonlocality, quantum
uncertainty and quantum steering. In fact, our results suggest that,
the possibility of a pair of unbiased qubit measurements leading to
Bell nonlocality is 60\%, which is larger than the 25\% for general
qubit measurements. Our results also indicate that the class of
qubit unbiased measurements is the best choice for studying the
connection between quantum measurement incompatibility and Bell
non-locality \cite{Quintino2016,Hirsch2018}. It would also be
interesting to design a schematic experiment to test the results.
For such experimental verification, one needs to construct random
gates to implement a random pair of POVMs $A$ and $B$ under a
specified distribution. Besides qubit measurements, it is also
interesting to estimate the incompatibility probability of a pair of
high-dimensional measurements by using the convex geometry and
probabilistic tools \cite{Aubrun2006,Aubrn2014}.

%=============================================================================%
\begin{acknowledgments}
LZ would like to thank Xiao-Ming Lu for his comments on the revision
of present manuscript. This research is supported by National
Natural Science Foundation of China under Grant Nos. 11971140,
11701259, 11675113, Beijing Municipal Commission of Education
(KZ201810028042), and Beijing Natural Science Foundation (Z190005).
LZ is also supported by Zhejiang Provincial
Natural Science Foundation of China under Grant no. LY17A010027.
\end{acknowledgments}

%=============================================================================%
\appendix
%=============================================================================%
\section{POOF OF THE EQUIVALENCE OF $f(\bsa,\bsb)>1$ AND $g(\bsa,\bsb)>2$}\label{sect:app-A}
%=============================================================================%

This result is mentioned in \cite{Yu2010} without proof. We provide below the detailed proof
for completeness. We first present the following proposition.

\begin{prop}
If $\bsa,\bsb\in\real^3$ with $\abs{\bsa}\leqslant1$ and
$\abs{\bsb}\leqslant1$, then
$\abs{\bsa}^2+\abs{\bsb}^2\leqslant1+\inner{\bsa}{\bsb}^2$ if and
only if $\abs{\bsa+\bsb}+\abs{\bsa-\bsb}\leqslant2$.
\end{prop}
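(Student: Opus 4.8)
The plan is to reduce both inequalities to a statement about the three scalars $a:=\abs{\bsa}$, $b:=\abs{\bsb}$, and $t:=\inner{\bsa}{\bsb}$, since every quantity appearing does depend only on these. (One may first note that $\bsa,\bsb$ span a subspace of dimension at most two, so there is no loss in taking $\bsa,\bsb\in\real^2$ and writing $t=ab\cos\theta$ with $\abs{t}\leqslant ab$; this is cosmetic but makes the parametrization transparent.) Introduce $P:=\abs{\bsa+\bsb}$ and $Q:=\abs{\bsa-\bsb}$, so that $P^2=a^2+b^2+2t$ and $Q^2=a^2+b^2-2t$, whence $P^2+Q^2=2(a^2+b^2)$ and $P^2Q^2=(a^2+b^2)^2-4t^2$.

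The core is then a short chain of equivalences. First, since $P,Q\geqslant0$, the condition $P+Q\leqslant2$ is equivalent to $(P+Q)^2\leqslant4$, i.e. to $2(a^2+b^2)+2PQ\leqslant4$, i.e. to
\[
  PQ\leqslant 2-(a^2+b^2).
\]
At this point the hypotheses $\abs{\bsa}\leqslant1$ and $\abs{\bsb}\leqslant1$ enter: they give $a^2+b^2\leqslant2$, so the right-hand side is nonnegative and the inequality may be squared reversibly, yielding $P^2Q^2\leqslant(2-(a^2+b^2))^2$, that is,
\[
  (a^2+b^2)^2-4t^2\leqslant 4-4(a^2+b^2)+(a^2+b^2)^2 .
\]
Cancelling $(a^2+b^2)^2$ and rearranging turns this into $a^2+b^2-1\leqslant t^2$, which is exactly $\abs{\bsa}^2+\abs{\bsb}^2\leqslant 1+\inner{\bsa}{\bsb}^2$. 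Reading the chain in both directions proves the proposition, and negating both sides gives at once the strict version $f(\bsa,\bsb)>1\iff g(\bsa,\bsb)>2$ used in the main text.

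The only genuinely delicate point is the reversibility of the two squaring steps, and this is precisely where the normalization constraints are needed: the squaring of $P+Q\leqslant2$ is harmless because $P,Q\geqslant0$, but the squaring of $PQ\leqslant 2-(a^2+b^2)$ is an equivalence only because $2-(a^2+b^2)\geqslant0$, which fails without $\abs{\bsa},\abs{\bsb}\leqslant1$. Once that nonnegativity is in hand, no further idea is required — the argument is just manipulation of quadratic expressions — so I would spend the write-up making the equivalences and the use of the hypotheses explicit rather than on any computation.
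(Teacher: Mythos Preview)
Your proof is correct. The chain of equivalences is sound, and you have correctly identified the one place where the hypotheses $\abs{\bsa},\abs{\bsb}\leqslant1$ are genuinely needed (to ensure $2-(a^2+b^2)\geqslant0$ so that the second squaring is reversible).

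The paper's own argument is organized differently: it treats the two implications separately rather than as a single reversible chain. For the forward direction it adds $\pm2\inner{\bsa}{\bsb}$ to both sides of $\abs{\bsa}^2+\abs{\bsb}^2\leqslant1+\inner{\bsa}{\bsb}^2$ to obtain $\abs{\bsa\pm\bsb}^2\leqslant(1\pm\inner{\bsa}{\bsb})^2$, then takes square roots and adds. For the converse it squares $\abs{\bsa+\bsb}\leqslant2-\abs{\bsa-\bsb}$, deduces $\abs{\bsa-\bsb}\leqslant1-\inner{\bsa}{\bsb}$, and squares again. Your approach via the symmetric combinations $P^2+Q^2$ and $P^2Q^2$ is more uniform and makes the use of the normalization hypothesis explicit; the paper's version is slightly shorter but leaves implicit the justification for the square-root step (which requires $1\pm\inner{\bsa}{\bsb}\geqslant0$, itself a consequence of $\abs{\bsa},\abs{\bsb}\leqslant1$ via Cauchy--Schwarz). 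Both are elementary, but yours is the tidier write-up.
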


\begin{proof}
If $\abs{\bsa}^2+\abs{\bsb}^2\leqslant1+\inner{\bsa}{\bsb}^2$ for
$\abs{\bsa}\leqslant1$ and $\abs{\bsb}\leqslant1$, then
$\abs{\bsa}^2+\abs{\bsb}^2\pm
2\inner{\bsa}{\bsb}\leqslant1\pm2\inner{\bsa}{\bsb}+\inner{\bsa}{\bsb}^2$.
That is, $\abs{\bsa\pm\bsb}\leqslant 1\pm\inner{\bsa}{\bsb}$. Then
$\abs{\bsa+\bsb}+\abs{\bsa-\bsb}\leqslant2$.

Now conversely, if $\abs{\bsa+\bsb}+\abs{\bsa-\bsb}\leqslant2$ for
$\abs{\bsa}\leqslant1$ and $\abs{\bsb}\leqslant1$, then
$\abs{\bsa+\bsb}^2\leqslant (2-\abs{\bsa-\bsb})^2$, i.e.,
\begin{eqnarray*}
&&\abs{\bsa}^2+\abs{\bsb}^2+2\inner{\bsa}{\bsb}\\
&&\leqslant 4-4\abs{\bsa-\bsb}+
\abs{\bsa}^2+\abs{\bsb}^2-2\inner{\bsa}{\bsb},
\end{eqnarray*}
which is equivalent to
$\abs{\bsa-\bsb}\leqslant1-\inner{\bsa}{\bsb}$. Then
$(\abs{\bsa-\bsb})^2\leqslant(1-\inner{\bsa}{\bsb})^2$ implies
that $\abs{\bsa}^2+\abs{\bsb}^2\leqslant1+\inner{\bsa}{\bsb}^2$.
\end{proof}

From the above proposition, we see that $f(\bsa,\bsb)>1$ if and only if
$g(\bsa,\bsb)>2$, namely,
$\Set{(\bsa,\bsb):f(\bsa,\bsb)>1}=\Set{(\bsa,\bsb):g(\bsa,\bsb)>2}$.

%=============================================================================%
\section{PROBABILITY DENSITY FUNCTION OF THE INNER PRODUCT OF TWO INDEPENDENT RANDOM UNIT VECTORS}\label{sect:app-B}
%=============================================================================%

Recall that there is a unique unitary-invariant measure (up to
normalization) $\mu$ over the sphere:
\begin{eqnarray*}
\dif\mu(\bsu) = \frac1{N_m}\delta(1-\abs{\bsu})[\dif\bsu],
\end{eqnarray*}
where
\begin{eqnarray*}
N_m=\int_{\real^m}\delta(1-\abs{\bsu})[\dif\bsu] =
\frac{2\pi^{\frac m2}}{\Gamma(\frac m2)}.
\end{eqnarray*}
Now the probability density function of the inner product
$\inner{\bsu}{\bsv}$ between two independent random unit vectors
$\bsu$ and $\bsv$ can be expressed as
\begin{eqnarray*}
p_m(s) = \int\delta(s-\inner{\bsu}{\bsv})\dif\mu(\bsu)\dif\mu(\bsv).
\end{eqnarray*}
By using the Haar measure (also denoted $\mu$) over the orthogonal
group, the above integral can be rewritten as
\begin{eqnarray*}
p_m(s) &=& \int\delta(s-\inner{U\bse_1}{V\bse_1})\dif\mu(U)\dif\mu(V)\\
&=& \int\delta(s-\inner{\bse_1}{W\bse_1})\dif\mu(W)\\
&=& \int\delta(s-\inner{\bse_1}{\bsu})\dif\mu(\bsu),
\end{eqnarray*}
where $U$, $V$, and $W$ are some unitary operators.

Using the Lebesgue measure, we obtain that
\begin{eqnarray*}
p_m(s) &=& \frac1{N_m}\int\delta(s-u_1)\delta(1-\abs{\bsu})[\dif\bsu]\\
 &=&\frac2{N_m}\int_{\real^{m-1}}\delta\Pa{(1-s^2)-\sum^m_{j=2}u^2_j}\prod^m_{j=2}\dif u_j.
\end{eqnarray*}
Set
\begin{eqnarray*}
\psi(t)=\int_{\real^{m-1}}\delta\Pa{t-\sum^m_{j=2}u^2_j}\prod^m_{j=2}\dif
u_j.
\end{eqnarray*}
Then its Laplace transform is given by
\begin{eqnarray*}
L(\psi)(\omega):=\prod^m_{j=2}\int_\real e^{-\omega u^2_j}\dif
u_j=\Pa{\frac{\pi}{\omega}}^{\frac{m-1}2}.
\end{eqnarray*}
Hence
\begin{eqnarray*}
\psi(t) = L^{-1}\Pa{\Pa{\frac{\pi}{\omega}}^{\frac{m-1}2}}(t)
=\frac{\pi^{\frac{m-1}2}}{\Gamma\Pa{\frac{m-1}2}}t^{\frac{m-3}2}.
\end{eqnarray*}
Therefore we have

\begin{prop}
The probability density function of the inner product between $\bsu$
and $\bsv$ is given by $p_m(s)=\frac2{N_m}\psi(1-s^2)$, that is,
\begin{eqnarray*}
p_m(s) = C_m\cdot(1-s^2)^{\frac{m-3}2},\quad s\in[-1,1],
\end{eqnarray*}
where $C_m=\frac{\Gamma\Pa{\frac m2}}{\sqrt{\pi}\Gamma\Pa{\frac{m-1}2}}$.
\end{prop}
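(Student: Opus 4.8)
The plan is to assemble what the derivation above has already set up; essentially nothing new is required beyond a substitution and a simplification of constants. The preceding argument has, by unitary invariance of the Haar measure, reduced the density to $p_m(s)=\int\delta\Pa{s-\inner{\bse_1}{\bsu}}\dif\mu(\bsu)$, then rewritten this spherical integral in terms of the ambient Lebesgue measure to get $p_m(s)=\frac2{N_m}\psi(1-s^2)$, where $\psi(t)=\int_{\real^{m-1}}\delta\Pa{t-\sum^m_{j=2}u^2_j}\prod^m_{j=2}\dif u_j$, and finally identified $\psi(t)=\frac{\pi^{(m-1)/2}}{\Gamma\Pa{(m-1)/2}}\,t^{(m-3)/2}$ by inverting its Laplace transform $(\pi/\omega)^{(m-1)/2}$.

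First I would substitute $t=1-s^2$ into the closed form for $\psi$ and insert it into $p_m(s)=\frac2{N_m}\psi(1-s^2)$, giving $p_m(s)=\frac{2\pi^{(m-1)/2}}{N_m\,\Gamma\Pa{(m-1)/2}}(1-s^2)^{(m-3)/2}$, so that $C_m=\frac{2\pi^{(m-1)/2}}{N_m\,\Gamma\Pa{(m-1)/2}}$. Then I would plug in $N_m=\frac{2\pi^{m/2}}{\Gamma(m/2)}$: the two factors of $2$ cancel, $\pi^{(m-1)/2}/\pi^{m/2}=\pi^{-1/2}$, and one is left with $C_m=\frac{\Gamma(m/2)}{\sqrt\pi\,\Gamma\Pa{(m-1)/2}}$, the asserted value; along the way this also records that $p_m$ is supported on and nonnegative on $[-1,1]$, since $1-s^2\geqslant0$ there. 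Equivalently, one notes $\frac{2\pi^{(m-1)/2}}{\Gamma((m-1)/2)}=N_{m-1}$, so the proposition can be phrased compactly as $C_m=N_{m-1}/N_m$, which is precisely what the coarea formula applied to the projection $\bsu\mapsto u_1$ on $S^{m-1}$ predicts (the level set at height $s$ being an $(m-2)$-sphere of radius $\sqrt{1-s^2}$, tilted by a factor $\sqrt{1-s^2}$).

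As a consistency check I would confirm the normalization: with $a=(m-1)/2$, $\int^1_{-1}(1-s^2)^{a-1}\dif s=\mathrm{B}\Pa{\tfrac12,a}=\frac{\Gamma(1/2)\,\Gamma(a)}{\Gamma\Pa{a+1/2}}=\frac{\sqrt\pi\,\Gamma\Pa{(m-1)/2}}{\Gamma(m/2)}=C_m^{-1}$, so $\int^1_{-1}p_m(s)\dif s=1$ and $p_m$ is a genuine probability density; this is the general-$m$ version of the identity $\int_{\widetilde\Omega}\dif\omega=1$ used in the main text at $m=3$.

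The one step that genuinely needs care is the passage from $\int\delta(s-u_1)\dif\mu(\bsu)$ to $\frac2{N_m}\psi(1-s^2)$: the factor of $2$ comes from the two preimages $u_1=\pm\sqrt{1-\sum_{j\geqslant2}u_j^2}$ on the unit sphere together with the Jacobian of $u_1\mapsto u_1^2$, and making this rigorous is exactly an instance of the coarea (or area) formula for the coordinate projection restricted to $S^{m-1}$. Once that reduction is granted — as it is in the derivation above — the remainder of the proof is the elementary Gamma-function bookkeeping just described.
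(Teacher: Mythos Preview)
Your proposal is correct and follows the paper's approach exactly: the paper's ``proof'' of this proposition is simply the derivation preceding it, and the proposition collects the outcome after substituting $t=1-s^2$ into the closed form for $\psi$ and simplifying via $N_m=2\pi^{m/2}/\Gamma(m/2)$. Your additional remarks --- the identification $C_m=N_{m-1}/N_m$, the Beta-integral normalization check, and the coarea interpretation of the factor $2$ --- go beyond what the paper records but are all correct and illuminating.
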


In particular, when $m=3$, $p_3(s)=\frac12$ for $s\in[-1,1]$.

Finally, the integral mentioned in the text is formulated as
\begin{eqnarray*}
&&\iint\Phi(\inner{\bsu}{\bsv})\delta(1-\abs{\bsu})\delta(1-\abs{\bsv})[\dif\bsu][\dif\bsv]\notag\\
&&=N^2_m\int^1_{-1}\Phi(s)p_m(s)\dif s,
\end{eqnarray*}
for any suitable function $\Phi(\cdot)$ of the inner product
$\inner{\bsu}{\bsv}$.

%=============================================================================%
\section{CALCULATION OF THE VOLUME: $\vol(\Theta^{\rN\rJ\rM}_{2,2})$}\label{sect:app-C}
%=============================================================================%

In fact,
\begin{eqnarray*}
\vol(\Theta^{\rN\rJ\rM}_{2,2}) = \int_{\Theta^{\rN\rJ\rM}_{2,2}}\dif
a_0\dif b_0[\dif\bsa][\dif\bsb],
\end{eqnarray*}
where a generic element $(a_0,a\cdot\bsu,b_0,b\cdot\bsv)$ of the set
$\Theta^{\rN\rJ\rM}_{2,2}\subset\Theta_{2,2}$ should satisfy
inequality \eqref{eq:yu}. Thus $\Theta^{\rN\rJ\rM}_{2,2}$ can be
transformed into the following form:
\begin{widetext}
\begin{eqnarray*}
\widetilde\Theta^{\rN\rJ\rM}_{2,2}=\Set{(a_0,a,b_0,b,s):
\begin{cases} \abs{a_0}+a\leqslant1,\abs{b_0}+b\leqslant1,\,\text{ where }\,a_0,b_0\in[-1,1],\,a,b\in[0,1]\\[1mm]
\Pa{1-h(a_0,a)^2-h(b_0,b)^2}\Pa{1-\frac{a^2_0}{h(a_0,a)^2}-\frac{b^2_0}{h(b_0,b)^2}}>
(abs-a_0b_0)^2\\[1mm]
h(x_0,x)=\frac{\sqrt{(1+x_0)^2-x^2}+\sqrt{(1-x_0)^2-x^2}}2\,\text{ and }\,s\in[-1,1]
\end{cases}}.
\end{eqnarray*}
\end{widetext}
Based on this observation, we get
\begin{eqnarray*}
\vol(\Theta^{\rN\rJ\rM}_{2,2})&=&
N^2_3\int_{\widetilde\Theta^{\rN\rJ\rM}_{2,2}}a^2b^2p_3(s)\dif
a_0\dif b_0\dif a\dif b \dif s\\
&=& 8\pi^2\int_{\widetilde\Theta^{\rN\rJ\rM}_{2,2}}a^2b^2\dif
a_0\dif b_0\dif a\dif b \dif s.
\end{eqnarray*}
By numerical calculation we have $\vol(\Theta^{\rN\rJ\rM}_{2,2}) \doteq 1.09662$.
%-----------------------------------------------------%
%=============================================================================%

\end{document}